\documentclass[9pt]{extarticle} 

\usepackage[numbers, compress]{natbib}
\bibliographystyle{plainnat}

\usepackage[total={5in, 8.75in}]{geometry}


\usepackage{bbm}
\usepackage{amsmath,amssymb,graphicx,amsthm}
\usepackage{mathtools}
\usepackage{xcolor}
\definecolor{mplblue}{HTML}{1f77b4}
\usepackage{hyperref}    
\hypersetup{
    colorlinks=true,
    allcolors = mplblue,
    }
\usepackage{url}    

\newtheorem{thm}{Theorem}
\newtheorem{tlem}{Technical Lemma}

\newtheorem{cor}{Corollary}
\newtheorem{defi}{Definition} 

\renewcommand{\Re}{\mathbb R}
\newcommand{\norm}[1]{\left\lVert#1\right\rVert}
\DeclarePairedDelimiterX{\normm}[1]{\lVert}{\rVert}{#1}

\newcommand{\T}{\mathsf T}
\newcommand{\p}{\theta}
\renewcommand{\r}{x}
\newcommand{\loss}{\ell}
\newcommand{\reg}{\mathtt{Reg}}

\DeclareMathOperator*{\argmin}{arg\,min}

\title{\bf Regret Analysis: a control perspective}







\author{Travis E. Gibson\thanks{Harvard Medical School, Brigham and Women's Hospital, and Broad Institute of MIT and Harvard \texttt{tegibson@bwh.harvard.edu}} \and Sawal Acharya\thanks{Research conducted while at Brigham and Women's Hospital, current address Stanford University}}

\date{}

\begin{document}

\maketitle

\begin{abstract}%
Online learning and model reference adaptive control have many interesting intersections. One area where they differ however is in how the algorithms are analyzed and what objective or metric is used to discriminate ``good'' algorithms from ``bad'' algorithms. In adaptive control there are usually two objectives: 1) prove that all time varying parameters/states of the system are bounded, and 2) that the instantaneous error between the adaptively controlled system and a reference system converges to zero over time (or at least a compact set). For online learning the performance of algorithms is often characterized by the regret the algorithm incurs. Regret is defined as the cumulative loss (cost) over time from the online algorithm minus the cumulative loss (cost) of the single optimal fixed parameter choice in hindsight. Another significant difference between the two areas of research is with regard to the assumptions made in order to obtain said results. Adaptive control makes assumptions about the input-output properties of the control problem and derives solutions for a fixed error model or optimization task. In the online learning literature results are derived for classes of loss functions (i.e. convex) while a priori assuming certain signals are bounded. In this work we discuss these differences in detail through the regret based analysis of gradient descent for convex functions and the control based analysis of a streaming regression problem. We close with a discussion about the newly defined paradigm of online adaptive control.
\end{abstract}


\section{Introduction}
Machine learning and adaptive control have many intersections. As pointed out recently in \citet{gaudio2019connections}, they have common: update laws, robustness modifications (leaky integrator = regularization), both can use projection, have early stopping procedures, adaptive gain/step-size methods, higher order update laws, and the list goes on. The similarities are even more striking in the area of online learning. The purpose of this work is to discuss similarities and differences taken by the adaptive community as compared to the the online learning community both in terms of performance goals and algorithms. We now discuss the origins of optimal control, adaptive control, and adaptive filtering (as they are pertinent to this discussion).\footnote{This short literature review is not meant to be exhaustive} This history is important, because there was a divergence in original literature that seems to have carried on to the present.

This brief historical review begins with the modernization of control theory in the 1960s by Rudolf K\'alm\'an as he  mathematized the state space approach to linear dynamical systems \cite{doi:10.1137/0301010}, introduced algebraic tools to the control field \cite{Kalman1503}, and offered his solution to optimal filtering \cite{10.1115/1.3662552}. Several scientist involved in the development of optimal control (for known dynamics) were also beginning to think about optimal control of uncertain systems. K\'alm\'an's own work in control of uncertain systems can be found in \cite{kalman1958design} where he ``examines the problem of building a machine which adjusts itself automatically to control an arbitrary dynamic process.'' It was originally thought that by the Certainty Equivalence (CE) principle one could iteratively estimate model parameters and then assuming those parameters were ground truth, derive optimal control gains \cite{simon1956dynamic}. Proving that such a scheme resulted in stable dynamics was challenging and CE as a principal didn't gain much traction until some 60 years later \cite{mania2019certainty}.

At the same time others were trying an entirely different approach, which would later become known as model reference adaptive control. In model reference adaptive control a desired reference dynamical system (or reference trajectory) is defined and then control gains are tuned online (in real-time) so as to drive the uncertain dynamical system toward the reference system. The first use of model reference adaptive control is credited to \citet{whitaker1959adaptive} where control parameters were trained in real time using gradient descent. Gradient descent within the context of adaptive control at that time was called the ``MIT rule''.\footnote{The MIT rule in its original form was not a stable algorithm for gradient descent learning with dynamical systems.} Interestingly, just a year later \citet{widrow1960adaptive} showed how to adaptively train a pattern classification machine, using gradient descent as well. The use of gradient descent within the latter context was called the Widrow-Hoff Algorithm, which would later be referred to as the Least Mean Square (LMS) Algorithm \cite{widrow1985adaptive,haykin2002adaptive}. These two publications, just one year apart, using the same principle, kick started model reference adaptive control and machine learning. 

In model reference adaptive control one has two objectives 1) prove that all time varying parameters/states of the system are bounded, and 2) that the instantaneous error between the adaptively controlled system and a reference system converges to zero over time (or at least a compact set). Importantly, results are derived for arbitrary initial conditions in terms of the state of the plant, and even the initial condition of the plant and controller can be such that the closed loop system is initially unstable. Unfortunately, at the onset of adaptive control heuristics for online gain tuning were used and denoted as ``adaptive'' and in 1967 an experimental X-15 aircraft crashed with such an ``adaptive'' controller \cite{dydek2010adaptive}. Following the crash several decades of work in robust adaptive control followed, resulting in the mature field it is now \cite{narendra1989stable,ioannou1995robust}, where it is even back to flying \cite{lavretsky2013robust,lavretsky2019design}.

In online learning proving boundedness of signals is not the first major concern. Using online regression as an example, the features at time $t$ are not a function of the parameter estimates from the previous time step (where as in adaptive control, the opposite is true). Thus it is not unreasonable to assume in online learning that the inputs to the models are uniformly bounded over time. It is also usually assumed that the gradients of the loss function are uniformly bounded over time. Secondly, the overall goal is not to prove convergence of an error of interest to zero, but instead to focus on optimization (or statistical learning theory) based measures over the length of the learning process. The most prominent optimality measure in online learning is what is referred to as {\it regret}. Regret is defined as the cumulative loss (cost) over time from the online algorithm minus the cumulative loss (cost) of the single optimal fixed parameter choice in hindsight. Its formal definition now follows.
\begin{defi}
Given a sequence of real valued loss functions $\{\loss_t\}$ where $\ell_t: \Re^n \to \Re$ and a bounded convex set $\Theta \subset \Re^n$, the regret of the parameter estimates $\hat \theta_t\in\Re^n$ over time $T$ is defined as.
\begin{equation}
\reg(T) \triangleq \textstyle \sum_{t = 1}^T \loss_t(\hat \theta_t) - \textstyle \sum_{t = 1}^T \loss_t(\bar{\theta}), \quad \quad \bar{\theta} = \argmin_{\theta \in \Theta}\textstyle \sum_{t = 1}^T \loss_t(\theta)  \label{eq:reg}
\end{equation}
\end{defi}

The origins of regret analysis for online learning are credited to \citet{blackwell1956} and \citet{hannan1957approximation} where they were interested in optimizing repeated play games. Taking the one player verses nature analogy from \citet{blackwell1956}, at each time $t$ a player chooses its best strategy $\hat \theta_t$ based on past information, and then nature returns the loss $\ell_t(\hat\theta_t)$ after that strategy is played. Note that nature's loss function, $\ell_t$, is time varying (or could be adversarially chosen by a second player at each step). Therefore it is not unreasonable to have accrued losses over time that grow unbounded. Therefore, when considering games like this, regret analysis was proposed so that learning strategies could be compared to the single best fixed strategy in hindsight. Regret analysis has been a part of the online learning community since with regret analysis also appearing in seminal work by \citet{valiant1984theory} and \citet{Littlestone1988} when trying to learn Boolean functions online. For a rigorous treatment of online learning and online convex optimization see \citet{cesa2006prediction,MAL-018,OPT-013}. A brief history of seminal work along the way (that informs our understanding of the field) is in the following footnote.\footnote{Brief history of regret analysis for online learning of convex functions: quadratic cost with linear functions \cite{cesa1996worst}, convex cost with regression, \cite{cesa1999analysis} general convex optimization \cite{Zinkevich:2003:OCP:3041838.3041955}, logarithmic regret for strongly convex functions \cite{hazan2007logarithmic}, and adaptive learning rates \cite{hazan2008adaptive,duchi2011adaptive}}

The purpose of this manuscript is to point out the differences in the analysis techniques and learning rate scalings used when analyzing gradient descent within the two communities. From the online learning community there has been great strides to solve general (even adversarial) problems. The control community is more rigid in their problem statements. That being said, the learning rates and analysis proposed within the control community give convergence guarantees. The main distinguishing feature between the fields is the use of learning rates in adaptive control that scale with the features in a regression problem. The online learning community has instead used learning rates that vanish over time. 

In subsequent sections the following three flavors of gradient descent will be used
\begin{align}
    \hat\theta_{t+1} & = \hat \theta_t - \eta_t \nabla  \loss_t(\hat\theta_t) \label{g:1} \\
    \hat\theta_{t+1} & = \mathtt{Proj}_\Theta(\hat \theta_t - \eta_t \nabla \loss_t(\hat\theta_t)) \label{g:2} \\
    \hat\theta_{t+1} & = \hat \theta_t - \eta_t \nabla \loss_t(\hat\theta_t) - \sigma_t \hat\theta_t \label{g:3}
\end{align}
where $\hat\theta_t\in\Re^n$ is the parameter estimate at time $t\in \mathbb N_{>0}$, $\eta_t>0$ is the time varying learning rate, $\ell_t\in \Re$ is the loss function, $\nabla\ell_t(\hat\theta_t)$ is the gradient of the loss function with respect to the parameter estimate, and $\sigma_t\geq 0$ is a leaky integration modification for robustness purposes (to be defined more explicitly at a later time). The operator $\mathtt{Proj}_\Theta$ projects arguments onto the convex set $\Theta$.\footnote{$\mathtt{Proj}_{\mathcal Y} (z) = \argmin_{y\in \mathcal Y} \norm{y-z}$}. Slightly abusing notation we will make the following definition moving forward 
\begin{equation*}
    \nabla \loss_t:=\nabla \loss_t(\hat\theta_t).
\end{equation*} 
The paper is organized as follows: In Section 2 we cover the basics of regret analysis in online learning. In section 3 we give the adaptive control based analysis to online (real-time streaming) regression. In Section 4 we review the results presented and close with some concerns and possible ideas for closing the gap between the two approaches.




\section{The basics of regret analysis for online convex optimization}
The learning rates and corresponding regret bounds for online convex optimization are as follows \cite{hazan2019lecture}, \cite{levy17}:%
\begin{align*}
 & \text{Convex} &     \eta_t& \propto \frac{1}{\sqrt {t}} \quad \text{or} \quad \eta_t \propto \frac{1}{\sqrt{\sum_{\tau=1}^t \norm {\nabla \ell_\tau}^2}}&     \reg(T) &= O(\sqrt T)  \\
 & \text{Strongly\ Convex} &     \eta_t& \propto \frac{1}{{t}} &     \reg(T) &= O(\log T). 
\end{align*}
It is important to note that the regret bounds in both cases are sub-linear. Thus for these two cases the average regret over time is zero, $\lim_{T \to \infty}\frac{\reg(T)}{T}=0$. Another distinctive feature of these results is that the learning rates either tend to zero or are monotonically decreasing over time. We will now present a key technical lemma that is widely used in the analysis of online convex optimization.



\begin{tlem}\label{tl1}
For $t\in\mathbb N_{> 0}$  the following bound holds
$  \sum_{t=1}^T \frac{1}{\sqrt{t}} \leq 2\sqrt T.$
\end{tlem}
\begin{proof}
From definition of the Riemann Integral, and the monotonicity of $\frac{1}{\sqrt t}$, the following bound holds $ \sum_{t=1}^T \frac{1}{\sqrt{t}} \leq \int_0^T  \frac{1}{\sqrt{\tau}} \, \mathrm d \tau$. Computing the integral gives the right hand side bound above. 
\end{proof}
With this lemma we are now ready to prove the sublinear regret bound stated above for convex online learning with the vanishing learning rate $\eta_t \propto 1/\sqrt{t}$. 
\begin{thm}
Let $\Theta$ be a bounded convex set with diameter D, and $\{ \ell_t\}$ be a sequence of convex functions such that $\norm{\nabla \ell_t} \leq G$. Then with learning rate $\eta_t = \tfrac{D}{G\sqrt{t}}$ the regret of the algorithm in Equation \eqref{g:2} can be bounded as follows $\reg(T)=O(\sqrt{T})$ \end{thm}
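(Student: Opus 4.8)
The plan is to run the standard potential/Lyapunov argument for projected online gradient descent, using the squared distance to the comparator $\bar\theta$ as the potential. First I would fix $\bar\theta\in\Theta$ as in the regret definition and, for each $t$, expand $\norm{\hat\theta_{t+1}-\bar\theta}^2$. Since $\hat\theta_{t+1}=\mathtt{Proj}_\Theta(\hat\theta_t-\eta_t\nabla\loss_t)$ and $\bar\theta\in\Theta$, the projection is nonexpansive so $\norm{\hat\theta_{t+1}-\bar\theta}^2\le\norm{\hat\theta_t-\eta_t\nabla\loss_t-\bar\theta}^2$. Expanding the right-hand side gives $\norm{\hat\theta_t-\bar\theta}^2-2\eta_t\langle\nabla\loss_t,\hat\theta_t-\bar\theta\rangle+\eta_t^2\norm{\nabla\loss_t}^2$. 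Rearranging yields the per-step inequality
\begin{equation*}
\langle\nabla\loss_t,\hat\theta_t-\bar\theta\rangle\le\frac{\norm{\hat\theta_t-\bar\theta}^2-\norm{\hat\theta_{t+1}-\bar\theta}^2}{2\eta_t}+\frac{\eta_t}{2}\norm{\nabla\loss_t}^2.
\end{equation*}

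Next I would invoke convexity of $\loss_t$: $\loss_t(\hat\theta_t)-\loss_t(\bar\theta)\le\langle\nabla\loss_t,\hat\theta_t-\bar\theta\rangle$, so summing over $t=1,\dots,T$ bounds $\reg(T)$ by $\sum_t\frac{\norm{\hat\theta_t-\bar\theta}^2-\norm{\hat\theta_{t+1}-\bar\theta}^2}{2\eta_t}+\sum_t\frac{\eta_t}{2}\norm{\nabla\loss_t}^2$. For the second sum, use $\norm{\nabla\loss_t}\le G$ and $\eta_t=\frac{D}{G\sqrt t}$ to get $\sum_t\frac{\eta_t}{2}\norm{\nabla\loss_t}^2\le\frac{DG}{2}\sum_t\frac{1}{\sqrt t}$, which is $O(\sqrt T)$ by Technical Lemma~\ref{tl1}. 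For the first (telescoping-like) sum, I would use Abel summation: regroup as $\sum_{t=2}^T\norm{\hat\theta_t-\bar\theta}^2\big(\frac{1}{2\eta_t}-\frac{1}{2\eta_{t-1}}\big)+\frac{\norm{\hat\theta_1-\bar\theta}^2}{2\eta_1}-\frac{\norm{\hat\theta_{T+1}-\bar\theta}^2}{2\eta_T}$, drop the last (nonpositive) term, bound every $\norm{\hat\theta_t-\bar\theta}^2\le D^2$ since both points lie in a set of diameter $D$, and note $\frac{1}{2\eta_t}-\frac{1}{2\eta_{t-1}}\ge0$ because $\eta_t$ is decreasing, so this piece telescopes to $D^2\cdot\frac{1}{2\eta_T}=\frac{DG\sqrt T}{2}$.

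Combining the two bounds gives $\reg(T)\le\frac{DG\sqrt T}{2}+\frac{DG}{2}\cdot 2\sqrt T=\tfrac{3}{2}DG\sqrt T=O(\sqrt T)$, completing the argument. The step I expect to require the most care is the bookkeeping in the Abel-summation rearrangement of the first sum — keeping track of the endpoint terms, checking the sign of the coefficient increments $\frac{1}{2\eta_t}-\frac{1}{2\eta_{t-1}}$ (which relies on monotonicity of $\eta_t$), and correctly using the diameter bound on $\norm{\hat\theta_t-\bar\theta}$ rather than a crude $\norm{\hat\theta_t}$ bound. Everything else (the nonexpansiveness of the projection, the convexity inequality, and the $\sum 1/\sqrt t$ estimate) is routine or already in hand from Technical Lemma~\ref{tl1}.
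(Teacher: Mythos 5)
Your proposal is correct and follows essentially the same argument as the paper's proof: nonexpansiveness of the projection, expansion of the squared distance, the convexity inequality, the telescoping/Abel rearrangement bounded via the diameter $D$, and Technical Lemma~\ref{tl1} for $\sum_t 1/\sqrt{t}$, arriving at the same $\tfrac{3}{2}GD\sqrt{T}$ bound. The only difference is cosmetic bookkeeping (you start the Abel sum at $t=2$, the paper uses the convention $1/\eta_0=0$).
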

\begin{proof} (following \cite{hazan2019lecture})
We begin by deriving a bound between the one time point step ahead parameter estimate $\hat \theta_{t+1}$ and the optimal fixed parameter value in hindsight, $\bar \theta$, defined in \eqref{eq:reg} as follows
$$\normm[\big]{ \hat{\theta}_{t+1} - \bar{\theta}}^2 = \normm[\big]{\mathtt{Proj}_\Theta(\hat{\theta}_t - \eta_t\nabla \ell_t) - \bar{\theta}}^2\leq \normm[\big]{\hat{\theta}_t - \eta_t\nabla \ell_t - \bar{\theta}}^2$$
where we have used the update law for the parameter estimate and properties of projection operators. Expanding the squared norm and recollecting terms
$$ \normm[\big]{\hat{\theta}_{t+1} - \bar{\theta}}^2 \leq \normm[\big]{\hat{\theta}_t - \bar{\theta}}^2 + \eta_t^2\normm[\big]{\nabla \ell_t}^2 - 2\eta_t\nabla \ell_t^\T(\hat{\theta}_t - \bar{\theta}) $$ and by swapping terms on the left and right hand side
$$2\nabla \ell_t^\T(\hat{\theta}_t - \bar{\theta}) \leq \frac{\normm[\big]{\hat{\theta}_t - \bar{\theta}}^2}{\eta_t} - \dfrac{\normm[\big]{\hat{\theta}_{t+1} - \bar{\theta}}^2}{\eta_t} + \eta_t \normm[\big]{\nabla \ell_t}^2. $$
Next we wish to bound some terms via a telescoping sequence, and by defining $\dfrac{1}{\eta_0} = 0$ and summing we have 
\begin{equation*}
\sum_{t = 1}^T2\nabla \ell_t^\T(\hat{\theta}_t - \bar{\theta}) \leq \sum_{t = 1}^T \normm[\big]{\hat{\theta}_t - \bar{\theta}}^2 \left( \dfrac{1}{\eta_t} - \dfrac{1}{\eta_{t -1}}\right) + \sum_{t = 1}^T \eta_t\norm{\nabla \ell_t}^2.
\end{equation*}
Using the fact that  $\ell_t$ is convex and $\normm[\big]{\hat{\theta}_t - \bar{\theta}} \leq D$
$$2 \sum_{t =1}^T \left( \ell_t(\hat{\theta}_t) - \ell_t(\bar{\theta}) \right)\leq D^2 \dfrac{1}{\eta_T} + G^2\sum_{t = 1}^T \eta_t.$$
This is the crucial part of the proof, and where one recognizes the need to have a learning rate that balances the trade off between $1/\eta_T$ and $\sum_{t=1}^T \eta_t$. Substituting our definition of the learning rate and applying Technical Lemma \ref{tl1} we have that $D^2 \dfrac{1}{\eta_T} + G^2\sum_{t = 1}^T \eta_t \leq 3GD\sqrt{T}$
\end{proof}

We will reiterate again that at the crux of the proof that online convex optimization with gradient descent attains sub-linear regret is being able to balance the two terms $\frac{1}{\eta_T}$ and $\sum \eta_t$. Another frequently used learning rate that can perform this balancing act is $\eta_t \propto \frac{1}{\sqrt{\sum_{\tau=1}^t \norm{\nabla \ell_\tau}^2}}$. Methods deploying this type of learning rate have been referred to as adaptive subgradient algorithms and have become wildly popular for training deep neural networks (\cite{duchi2011adaptive}). With the following lemma and corollary one is able to show $\sum_{t=1}^T \frac{\norm{\nabla \ell_t}^2}{\sqrt{\sum_{\tau=1}^t \norm{\nabla \ell_\tau}^2}}=O(\sqrt{T})$ which in turn can be used to prove  $O(\sqrt{T})$ regret. 
\begin{tlem} (\cite{DBLP:journals/corr/abs-1002-4908})
Let $\{b_t\}$ be a sequence of elements in $\Re_{\geq 0}$ indexed by $t\in\mathbb N_{>0}$. Then the following bound holds $\sum_{t=1}^T \frac{b_t}{\sqrt{\sum_{\tau=1}^t {b_\tau}}} \leq 2\sqrt{\sum_{t=1}^T b_t}$. 
\end{tlem}
\begin{cor}
Let $\{a_t\}$ be a \underline{bounded} sequence of elements in $\Re$, thus there exists a constant $D>0$ such that $\norm{a_t}\leq D$. The following bound holds $
    \sum_{t=1}^T \frac{a^2_t}{\sqrt{\sum_{\tau=1}^t {a^2_\tau}}} \leq 2 D \sqrt{T}$.
\end{cor}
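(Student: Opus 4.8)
The plan is to obtain this as an immediate consequence of the preceding Technical Lemma by the substitution $b_t = a_t^2$. Since $a_t\in\Re$, each $b_t = a_t^2$ lies in $\Re_{\geq 0}$, so the hypothesis of that lemma is met verbatim, and it yields
\[
\sum_{t=1}^T \frac{a_t^2}{\sqrt{\sum_{\tau=1}^t a_\tau^2}} \;=\; \sum_{t=1}^T \frac{b_t}{\sqrt{\sum_{\tau=1}^t b_\tau}} \;\leq\; 2\sqrt{\sum_{t=1}^T b_t} \;=\; 2\sqrt{\sum_{t=1}^T a_t^2}.
\]

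The second and final step is to use boundedness to control the right-hand side: from $\norm{a_t}\leq D$ we get $a_t^2 \leq D^2$ for every $t$, hence $\sum_{t=1}^T a_t^2 \leq T D^2$, and therefore $2\sqrt{\sum_{t=1}^T a_t^2} \leq 2\sqrt{T D^2} = 2D\sqrt{T}$, which is the claimed bound. Chaining the two displays closes the argument.

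There is essentially no real obstacle here; the only point worth a sentence of care is the degenerate case in which $\sum_{\tau=1}^t a_\tau^2 = 0$ (equivalently $a_1 = \cdots = a_t = 0$), where the $t$-th summand is the indeterminate $0/0$. I would dispatch this by the usual convention that such a term is set to $0$ — consistent with the convention already implicit in the Technical Lemma — or, equivalently, note that these zero terms may simply be dropped from the sum, after which the inequality for the surviving terms follows as above. With that remark in place the corollary is proved.
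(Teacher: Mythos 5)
Your proof is correct and follows exactly the route the paper intends: the corollary is an immediate consequence of the preceding Technical Lemma via $b_t = a_t^2$, followed by the crude bound $\sum_{t=1}^T a_t^2 \leq TD^2$. The extra remark about the degenerate $0/0$ terms is a fine touch but not a point of divergence from the paper's argument.
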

We will not discuss online learning with strongly convex functions, but just to illustrate why a learning rate of $1/t$ might attain a $\log(T)$ regret bound, just remember that the integral of $1/t$ is $\log(t)$ plus a constant.

\section{Online regression using analysis from the control community} \label{sec:control}

We now present online regression using the same analysis tools that are used to prove stability for the (model reference) adaptive control of discrete time systems. At the outset we have to note that the tools just presented are non-trivially more generally applicable. In this section we will have to detail most aspects of the model and define an explicit loss function so as to perform analysis. First we present the analysis of online regression and then we present the analysis of online regression in the presence of perturbations.

Consider the following regression problem
\begin{align}
    y_t  &= \theta_*^{\T} x_t  \label{eq:regy}\\
    \hat y_t &= \hat\theta_t^\mathsf T x_t  \label{eq:regyh}
\end{align}
where $x_t\in \Re^n$ are bounded time varying features,\footnote{In the regular adaptive control setting the features are the states of the plant and are not assumed to be bounded, but are proved to be bounded.} $\theta_*\in \Theta \subset \Re^n$ is an unknown parameter, $\hat\theta_t\in \Re^n$ is the online parameter estimate, $y_t$ is the known scalar ``reference'' output and $\hat y_t$ is the scalar time varying estimate of $y_t$. With the output error defined as $e_t\triangleq \hat y_t-y_t$ our loss function for this section will be defined as $\ell_t = \frac{1}{2}e_t^2$. Finally we will use the following variable $\tilde \theta_t \triangleq \hat \theta_t-\theta_* $ to denote the difference between the parameter estimate and the unknown parameter. Unlike before, in this section we will be using a learning rate
\begin{equation}\label{adlearn}
    \eta_t = \frac{\alpha_t}{m+x_t^\mathsf T x_t}
\end{equation}
where $\alpha_t, m>0$. Note that this learning rate scales with the inverse of the features and does not monotonically decrease or tend to zero over time.


\begin{thm}\label{thmad1}
For the online regression problem in \eqref{eq:regy} and \eqref{eq:regyh} with gradient descent algorithm in \eqref{g:1} with learning rate $\eta_t$ defined in \eqref{adlearn}, $\alpha_t \in (0,2)$ and $m>0$ all signals are uniformly bounded $\sum_{t=1}^\infty (\hat y_t - y_t)^2$ is bounded, and furthermore, $\lim_{t\to \infty} \hat y_t = y_t$.
\end{thm}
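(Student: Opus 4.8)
\emph{Proof sketch (proposed).} The plan is the standard Lyapunov/energy argument used for discrete-time model reference adaptive control. First I would make the update explicit: since $\ell_t=\tfrac12(\hat\theta_t^\T x_t-y_t)^2$, the gradient is $\nabla\ell_t=e_t x_t$ with $e_t=\hat y_t-y_t$, so \eqref{g:1} becomes $\hat\theta_{t+1}=\hat\theta_t-\eta_t e_t x_t$. Subtracting the constant $\theta_*$ gives the parameter-error recursion $\tilde\theta_{t+1}=\tilde\theta_t-\eta_t e_t x_t$, together with the key identity $e_t=(\hat\theta_t-\theta_*)^\T x_t=\tilde\theta_t^\T x_t$. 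I would then take the candidate $V_t=\normm{\tilde\theta_t}^2$.

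Expanding the recursion and using $\tilde\theta_t^\T x_t=e_t$ to combine the cross term with the quadratic term,
\[
V_{t+1}=\normm{\tilde\theta_t-\eta_t e_t x_t}^2 = V_t-\eta_t e_t^2\bigl(2-\eta_t\normm{x_t}^2\bigr).
\]
The real content of the normalized learning rate \eqref{adlearn} is that $\eta_t\normm{x_t}^2=\alpha_t\normm{x_t}^2/(m+\normm{x_t}^2)<\alpha_t<2$ no matter how large $\normm{x_t}$ is; hence $2-\eta_t\normm{x_t}^2>2-\alpha_t>0$ and
\[
V_{t+1}-V_t\le-\frac{\alpha_t(2-\alpha_t)}{m+\normm{x_t}^2}\,e_t^2\le 0 .
\]
So $V_t$ is nonincreasing and bounded below by $0$, hence convergent; in particular $\tilde\theta_t$ is bounded, therefore $\hat\theta_t=\tilde\theta_t+\theta_*$ is bounded, and since $x_t$ is bounded by hypothesis, $\hat y_t$, $y_t$, and $e_t=\hat y_t-y_t$ are all bounded. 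This is the first claim.

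For the remaining claims I would sum the telescoping inequality over $t=1,\dots,T$, giving $\sum_{t=1}^T \alpha_t(2-\alpha_t)e_t^2/(m+\normm{x_t}^2)\le V_1-V_{T+1}\le V_1<\infty$ uniformly in $T$. Letting $T\to\infty$ and using that $\normm{x_t}^2$ is uniformly bounded (so $1/(m+\normm{x_t}^2)$ is bounded below by a positive constant) together with $\alpha_t(2-\alpha_t)$ bounded below by a positive constant yields $\sum_{t=1}^\infty(\hat y_t-y_t)^2<\infty$; since the terms of a convergent series of nonnegative numbers tend to zero, $e_t\to 0$, i.e. $\hat y_t\to y_t$.

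There is no single hard step, but two points deserve care. First, the positivity of $2-\eta_t\normm{x_t}^2$ — which is what makes $V_t$ a genuine Lyapunov function, and which, notably, requires \emph{no} a priori bound on $x_t$ — rests entirely on the $m+\normm{x_t}^2$ normalization in \eqref{adlearn}; with a vanishing step size one must instead assume $\normm{x_t}$ bounded to make this sign hold uniformly, which is precisely the contrast the paper is highlighting. Second, the endpoint behaviour of $\alpha_t$: if $\alpha_t$ is allowed to approach $0$ or $2$ the factor $\alpha_t(2-\alpha_t)$ degenerates and one can no longer pass from $\sum e_t^2/(m+\normm{x_t}^2)<\infty$ to $\sum e_t^2<\infty$, so the statement should be read with $\alpha_t$ bounded away from $0$ and $2$ (constant $\alpha$ being the cleanest case). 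Finally, it is worth flagging that this argument delivers $\hat y_t\to y_t$ but \emph{not} $\hat\theta_t\to\theta_*$; parameter convergence would additionally need a persistent-excitation condition on $\{x_t\}$, which is deliberately not assumed here.
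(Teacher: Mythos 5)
Your proposal is correct and follows essentially the same argument as the paper: the Lyapunov candidate $V_t=\normm{\tilde\theta_t}^2$, the cancellation $2-\eta_t\normm{x_t}^2>2-\alpha_t>0$ enabled by the normalized learning rate, the telescoping sum, and the use of bounded $x_t$ (and $\alpha_t$ bounded away from $0$ and $2$, which you rightly flag as implicit in the theorem statement) to pass from $\sum e_t^2/(m+\normm{x_t}^2)<\infty$ to $\sum e_t^2<\infty$. Your final step (terms of a convergent nonnegative series tend to zero) is a slightly more direct route to $e_t\to 0$ than the paper's $e_t\in\ell^2\cap\ell^\infty$ phrasing, but it is the same proof in substance.
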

\begin{proof} (following \citet{goodwin1980discrete})
Define a Lyapunov candidate $V_t = \tilde \theta_t^\mathsf T \tilde \theta_t $  (note that this is not the loss function at time $t$, 
$\tfrac{1}{2}e_t^2$). 
The one time discrete difference of the candidate function is as follows $ \Delta V_t \triangleq V_{t+1}- V_t =  \tilde \theta_{t+1}^\mathsf T \tilde \theta_{t+1} - \tilde \theta_t^{\mathsf T} \tilde \theta_t $. Substituting the update law from \eqref{g:1} results in $$ \Delta V_t =  [\hat\theta_{t} - \eta_t \nabla \ell_t -\theta_*] ^\mathsf T [ \hat\theta_{t} - \eta_t \nabla \ell_t-\theta_*] - \tilde \theta_t^{\mathsf T} \tilde \theta_t.$$Collecting and canceling $\tilde x_t$ and substituting the gradient of our loss function 
$$
\Delta V_t = - 2 \tilde \theta_t^\mathsf T \eta_t e_t x_t + \eta_t^2 e_t^2 x_t^\mathsf T x_t.
$$
Substituting the learning rate for $\eta_t$ from \eqref{adlearn} and collecting the error terms, $e_t$, the following bound holds
\begin{equation}
\Delta V_t = \alpha_t \left[- 2   + \alpha_t \frac{ x_t^\mathsf T x_t}{m+x_t^\mathsf T x_t} \right] \frac{e_t^2}{{m+x_t^\mathsf T x_t}} \leq 0 \quad \quad (\forall\ 0\leq \alpha_t\leq 2).
\end{equation}
$\{V_t\}$ is a bounded monotonic sequence and thus it converges. This then implies that $\hat\theta_t$ is bounded (without having to have used projection). Then in the final step we show that $\sum e_t^2$ is uniformly bounded. Summing both sides and multiplying by a negative one it follows that
\begin{equation}
\begin{split}
    V_{1} - V_T &= \sum_{t=1}^T   \alpha_t \left[2   - \alpha_t \frac{ x_t^\mathsf T x_t}{m+x_t^\mathsf T x_t} \right] \frac{e_t^2}{{m+x_t^\mathsf T x_t}} \\
&\geq  \sum_{t=1}^T   \alpha_t \left[2   - \alpha_t \right] \frac{e_t^2}{{m+x_t^\mathsf T x_t}}
\end{split}
\end{equation}
Thus  $\sum_{t_0}^T {e_t^2} \leq c_1 V_{1}$, where $c_1>0$ is a constant which is a function of the known upper bound on the streaming regressor vectors and the user defined $\alpha_t$ . Taking the limit as $t$ tends to $\infty$, $e_t \in \ell^2 $ and because $\theta_t,x_t\in \ell^\infty$ this implies that $e_t\in \ell^\infty$ as well. Combing these results we have that $e_t\in \ell^2 \cap \ell^\infty $ and therefore $\lim_{t\to\infty} e_t=0$.\end{proof}


The crux to this proof is ensuring that we have a learning rate that is smaller than the squared 2-norm of the features at any instance in time. This kind of scaling also appears in what is called Normalized LMS, but we were unable to find any of the literature stemming from \cite{widrow1985adaptive,haykin2002adaptive} that had this style of proof with these strong guarantees. It is also important to note that we did not prove that $\hat\theta_t$ tends to $\theta_*$ but we were still able to prove that $\hat y_t$ asymptotically converges to $y_t$. In order to prove that $\hat\theta_t$ converges to $\theta_*$ extra conditions are required on the regressor vector $x_t$, namely that $x_t$ is Persistently Exciting (PE) (\cite{doi:10.1137/15M1047805}).\footnote{The vector $x_t\in \Re^n$ is PE if there exists $\beta,T>0$  such that $\sum_{\tau = t}^{T+t} x_\tau x_\tau^\T \succeq \beta I \quad \forall\ t.$} Having the regressor vector PE is a necessary and sufficient condition for exponential stability in the large for the point $\tilde\theta_t = 0.$

We concede that even with such strong guarantees the model is ``simple'' by ML standards and the reference model is not adversarial either. Within the adaptive control literature the closest things we have to adversarial types of proofs are when stability is proved in the presence of unmodelled dynamics or bounded disturbances. This line of work has lead to several robustness modifications \cite{ioannou1995robust,narendra1989stable} and we present a flavor of that kind of model and learning rule modification here now.

Consider the disturbed regression model
\begin{equation}
    y_t  = \theta_*^{\T} x_t +d_t \label{eq:regd}
\end{equation}
where $d_t$ is now a bounded time varying disturbance and  $\theta_* \in \Theta$ for some known bounded convex set $\Theta$. In order to prove that we can stably adapt to the disturbed reference we will need to modify the standard gradient descent law and thus we will be using the parameter update in \eqref{g:3} where the $\sigma$-modification (\cite{ioannou1995robust}) term $\sigma_t$ is defined as 
\begin{equation}\label{sig}
\quad \sigma_t =  \begin{cases} 0 & \text{ if } \hat\theta_t \in \Theta \\ \sigma>0 & \text{ otherwise }.\end{cases}
\end{equation}
Another name for such a modification is ``leaky'' integrator and it is important to note that the same type of term $\sigma_t \hat\theta_t$ arises when one performs gradient descent on an L2 regularized loss function (i.e. $\nabla_\theta \norm \theta^2 \propto \theta$). We now prove boundedness of the parameter estimates in the presence of disturbances.

\begin{thm} For the online regression problem in \eqref{eq:regd} and \eqref{eq:regyh} with gradient descent algorithm in \eqref{g:3}, bounded disturbance $\norm{d_t} \leq d$, with learning rate $\eta_t$ defined in \eqref{adlearn}, $\alpha_t \in (0,2)$, $\sigma_t$ defined as in \eqref{sig} and $m>\max\{d,1\}$, all signals are uniformly bounded.
\end{thm}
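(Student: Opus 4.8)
The plan is to imitate the Lyapunov argument of Theorem \ref{thmad1}, carrying along the extra terms that the $\sigma$-modification produces and using the leaky term $-\sigma_t\hat\theta_t$ to supply the dissipation that the projection supplied implicitly before. I would again take $V_t=\tilde\theta_t^\T\tilde\theta_t$ with $\tilde\theta_t=\hat\theta_t-\theta_*$, noting that now the output error is $e_t=\hat y_t-y_t=\tilde\theta_t^\T x_t-d_t$ while $\nabla\loss_t=e_t x_t$ still. Substituting the update \eqref{g:3} into $\Delta V_t=V_{t+1}-V_t$ and expanding the square gives exactly the two terms $-2\eta_t e_t\tilde\theta_t^\T x_t+\eta_t^2 e_t^2 x_t^\T x_t$ that appeared in Theorem \ref{thmad1}, plus three new terms coming from $-\sigma_t\hat\theta_t$: $-2\sigma_t\tilde\theta_t^\T\hat\theta_t$, $\sigma_t^2\,\hat\theta_t^\T\hat\theta_t$, and a cross term $2\sigma_t\eta_t e_t\,x_t^\T\hat\theta_t$.

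The old pair is dealt with verbatim as in Theorem \ref{thmad1}: since $\eta_t^2 x_t^\T x_t=\alpha_t\eta_t\,\tfrac{x_t^\T x_t}{m+x_t^\T x_t}<\alpha_t\eta_t$, completing the square in the scalar $\tilde\theta_t^\T x_t$ (whose coefficient $\alpha_t-2$ is negative) bounds $-2\eta_t e_t\tilde\theta_t^\T x_t+\eta_t^2 e_t^2 x_t^\T x_t$ above by $\tfrac{\eta_t d^2}{2-\alpha_t}$, a bounded constant because $\eta_t\le\alpha_t/m$. For the new terms the key device is the identity $2\tilde\theta_t^\T\hat\theta_t=\tilde\theta_t^\T\tilde\theta_t+\hat\theta_t^\T\hat\theta_t-\theta_*^\T\theta_*$ (it uses $\tilde\theta_t-\hat\theta_t=-\theta_*$), which turns $-2\sigma_t\tilde\theta_t^\T\hat\theta_t$ into $-\sigma_t V_t-\sigma_t\hat\theta_t^\T\hat\theta_t+\sigma_t\theta_*^\T\theta_*$; together with $\sigma_t^2\hat\theta_t^\T\hat\theta_t$ this is at most $-\sigma_t V_t+\sigma_t\theta_*^\T\theta_*$ once $\sigma\le 1$. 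The summand $-\sigma_t V_t$ is precisely the dissipation we want, and it is switched on exactly when $\hat\theta_t\notin\Theta$, i.e. when $V_t$ is large.

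The main obstacle is the cross term $2\sigma_t\eta_t e_t\,x_t^\T\hat\theta_t$, which entangles the two modifications: because $\tilde\theta_t^\T x_t$ can be as large as $\norm{x_t}\sqrt{V_t}$, a careless bound here reintroduces a term proportional to $\sigma_t V_t$ and could cancel the dissipation. This is where the normalization in \eqref{adlearn} and the hypothesis $m>\max\{d,1\}$ pay off: $\eta_t x_t^\T x_t<\alpha_t$, and by the elementary inequality $m+x_t^\T x_t\ge 2\sqrt{m}\,\norm{x_t}$ one has $\eta_t\norm{x_t}\le\tfrac{\alpha_t}{2\sqrt{m}}<1$ since $\alpha_t<2$ and $m>1$. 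Writing $x_t^\T\hat\theta_t=x_t^\T\tilde\theta_t+x_t^\T\theta_*$ and $|e_t|\le|\tilde\theta_t^\T x_t|+d$ and splitting the products with Young's inequality, every $V_t$- and $\hat\theta_t^\T\hat\theta_t$-proportional piece of the cross term is then absorbed into the negative $-\sigma_t V_t$ and $-\sigma_t(1-\sigma_t)\hat\theta_t^\T\hat\theta_t$ contributions, while the disturbance-dependent leftovers stay bounded by a constant (here $m>d$ is what controls them); if needed one also takes $\sigma$ small relative to $2-\alpha_t$, the usual regime for the $\sigma$-modification, so the net coefficient of $V_t$ remains negative.

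Collecting the estimates yields an inequality $\Delta V_t\le-\sigma_t c_1 V_t+c_2$ with constants $c_1,c_2>0$ depending only on $d$, $m$, $\sup_t\alpha_t$, $\sigma$, $\norm{\theta_*}$ and the finite supremum of $\norm{x_t}$. The conclusion follows by a case split. If $\hat\theta_t\in\Theta$ then $\sigma_t=0$ and $V_t\le D^2$ automatically, $D$ being a bound on the diameter of $\Theta$. If instead $\hat\theta_t\notin\Theta$ then $\sigma_t=\sigma$, so $\Delta V_t<0$ as soon as $V_t>\bar V:=2c_2/(\sigma c_1)$. Since $\hat\theta_t\in\Theta$ forces $V_t\le D^2$, whenever $V_t$ exceeds $\max\{D^2,\bar V\}$ it strictly decreases, and a single step can raise $V_t$ by at most $c_2$; hence $\{V_t\}$ stays bounded by $\max\{V_1,D^2,\bar V\}+c_2$ for all $t$, so $\{V_t\}$, and therefore $\{\tilde\theta_t\}$ and $\{\hat\theta_t\}$, are uniformly bounded. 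Because $x_t$ is bounded by hypothesis, $\hat y_t=\hat\theta_t^\T x_t$ and $e_t=\hat y_t-y_t$ are then uniformly bounded as well, so all signals are uniformly bounded, as claimed.
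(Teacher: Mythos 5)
Your proposal is correct in substance and follows the same backbone as the paper's proof: the Lyapunov candidate $V_t=\tilde\theta_t^\T\tilde\theta_t$, expansion of $\Delta V_t$ under \eqref{g:3}, and the observation that the $\sigma$-modification supplies negative drift once $\hat\theta_t$ leaves $\Theta$. Where you genuinely differ is in how that drift is extracted: the paper completes the square, quotes the bound $\Delta V_t \leq -\tfrac{\alpha}{2}\tfrac{(\tilde\theta_t^\T x_t)^2}{m+x_t^\T x_t}-\tfrac14\sigma_t\hat\theta_t^\T\tilde\theta_t+c_2 d^2$, and then argues that $\sigma_t\neq 0$ only when $\hat\theta_t^\T\tilde\theta_t>0$ (a switching-geometry argument that really relies on norm-ball-type switching, $\Theta\supseteq\{\theta:\norm{\theta}\leq\norm{\theta_*}\}$, as in the supplementary lemma), whereas you use the identity $2\hat\theta_t^\T\tilde\theta_t=V_t+\norm{\hat\theta_t}^2-\norm{\theta_*}^2$ to obtain an explicit $-\sigma_t V_t$ term unconditionally, and you finish with a more careful containment argument (bounded one-step increase plus strict decrease above a threshold) than the paper's terse ``$\Delta V_t\leq 0$ for $V_t$ large implies boundedness.'' That quantitative handling is arguably cleaner and avoids the geometric caveat in the paper's sign argument. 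One bookkeeping point to fix: the $V_t$-proportional piece of the cross term $2\sigma_t\eta_t e_t x_t^\T\hat\theta_t$ cannot be absorbed into $-\sigma_t V_t$ as you state, because both scale with $\sigma_t$ and the crude bound gives a coefficient up to $2\alpha_t\sigma_t$, which exceeds $\sigma_t$ whenever $\alpha_t>1/2$ no matter how small $\sigma$ is. The correct absorption is into the $e_t^2$ margin of the ``old pair'': writing $\hat\theta_t^\T x_t=e_t+d_t+\theta_*^\T x_t$, the cross term contributes $2\sigma_t\eta_t e_t^2$ plus bounded-coefficient linear terms, so the quadratic coefficient becomes $\eta_t(\alpha_t-2+2\sigma_t)$ and the condition $\alpha_t+2\sigma<2$ (your ``$\sigma$ small relative to $2-\alpha_t$'') is the primary mechanism, not a fallback; with it, $\Delta V_t\leq-\sigma_t V_t+c_2$ follows and your endgame goes through. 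Note also that this smallness restriction on $\sigma$ is not in the theorem statement but is implicitly required by the paper's own proof as well (the supplementary material assumes $\alpha+2\sigma\leq 1$), so flagging it explicitly, as you do, is appropriate rather than a defect.
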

\begin{proof} (following \cite{ioannou1986robust})
Define a Lyapunov candidate $V_t = \tilde \theta_t^\mathsf T \tilde \theta_t $ as before and 
 $ \Delta V_t \triangleq V_{t+1}- V_t =  \tilde \theta_{t+1}^\mathsf T \tilde \theta_{t+1} - \tilde \theta_t^{\mathsf T} \tilde \theta_t $. Evaluating $\Delta V_t$ with the gradient update in \eqref{g:3} and after completing the square the following bound holds
\begin{equation}
\Delta V_t \leq \dfrac{-\alpha}{2}\dfrac{(\tilde \theta_t^T x_t)^2}{m + x^\T x} - \dfrac{1}{4}\sigma_t \hat\theta_t^\T\tilde \theta_t + c_2 d^2
\end{equation}
where $c_2>0$ is a constant. At this point it may appear that we are stuck because of the sign indefinite potentially unbounded term $\hat\theta_t^\T \tilde \theta_t$. It turns out however that this term is not sign indefinite. By the definition of $\sigma_t$ in \eqref{sig}, $\sigma_t$ is nonzero only when $\hat\theta_t^\T \tilde \theta_t>0$ and thus $-\sigma_t \hat\theta_t^\T\tilde \theta_t \leq 0$. Thus it follows that $\Delta V_t \leq 0$ when $V_t$ is sufficiently large. This implies that $\tilde\theta_t$ is uniformly bounded. Then given that $x_t$ is bounded it also follows that $e_t$ is bounded as well. \end{proof}

\section{Contrasting the two approaches and the new paradigm of Online Adaptive Control}\label{sec:oac}


So far we have presented the basics of regret analysis for online convex optimization, and stability proofs for streaming real-time regression that are inspired by the model reference adaptive control literature. There are two main differences in the proof structure that are subtle but important. In regret analysis, one tries to directly bound the loss function and the accumulated cost associated with it. In the adaptive results just presented, however, bounding the parameter error (i.e. $\tilde \theta$) is what is tackled first. In the process of bounding the parameter error the model following error (which defines our loss in Section \ref{sec:control} ends up being square integrable bounded ($e \in \ell^2$). For online learning it never made sense to first bound the parameter error, because if one uses projection you get this for free. This very slight difference in initial goals results in the two fields using two different principles to define learning rates. The online learning community tries to ensure $\sum \eta_t$ does not grow too quickly while also ensuring that $\eta_t$ decays gracefully (not too fast). In the control inspired stability proofs one has to ensure that the term $\eta_t x^\T_t x_t$ is sufficiently small, thus leading to the learning rate scaling $\eta_t \propto  \frac{1}{m+ x^\T_t x_t}$.

One interesting intersection of control and online learning is the newly minted area of ``Online Adaptive Control'' (OAC), where we have now come full circle and arrived back at the original paradigm of CE control. 
For the purposes of this discussion we will be considering the linear dynamics
\begin{equation*}
    \label{eq:linear}
    x_{t+1} = A_* x_t + B_* u_t + w_t
\end{equation*}
where $w_t$ is the process disturbance and $(A_*,B_*)$ are unknown parameters. In OAC one tries to minimize the regret associated with the quadratic cost
\begin{equation*}
\ell_t = x_t^\T Q x_t + u_t^\T R u_t, \quad Q\succeq 0, R\succ 0.
\end{equation*}
If the dynamics were known then the input that minimizes the infinite time horizon cost  $\sum_{t=0}^\infty\ell_t$ would simply be the linear feedback $u_t=Kx_t$ where $K$ is the solution to the Discrete Algebraic Riccati Equation (DARE). In the setting where the plant is unknown however, the regret optimal input under the OAC paradigm takes the form
\begin{equation*}
    u_t=K_tx_t+\xi_t.
\end{equation*}
where in addition to the time varying linear feedback $K_tx_t$ we have an exploration signal
\begin{equation*}
    \xi_t\sim \texttt{Normal}(0,\sigma_{\xi,t}^2).
\end{equation*}
Based on the CE principle, at each time $t$, one estimates the unknown model parameters $(A_*,B_*)$ giving ourselves $(\hat A_t, \hat B_t)$ and then using a closed form solution to the DARE one can define a map
$(\hat A_t, \hat B_t)\mapsto K_t$.

Now this all straight forward but in the previous section we emphasized that all regret based update laws had this interesting property of vanishing learning rates. Here, none of the parameters are updated with gradient descent and thus there is no learning rate. We do still however have a hyper-parameter in this paradigm that directly controls how much learning is occurring and that is the user specified exploration signal variance $\sigma_{\xi,t}^2$. In order to be regret optimal the exploration signal must vanish, with one popular choice being\cite{mania2019certainty}
\begin{equation*}
    \sigma_{\xi,t}^2 \propto \frac{1}{\sqrt{t}},
\end{equation*}
which should now be familiar.

This was just one example. There are other flavors of regret based optimal control whereby the exploration signal does not decay over time, but instead the control parameters themselves are updated via gradient descent with a time-varying and asymptomatically decreasing learning rate \cite{simchowitz2020improper,hazan2020nonstochastic}. Either directly or indirectly to be regret optimal the rate at which you learn must decrease over time.

\section{Closing}
We hope this note was helpful in familiarizing the control community with some of the technical nuances of regret analysis.




\bibliography{refs.bib}

\end{document}